\journal{Journal of \LaTeX\ Templates}
	\newtheorem{theorem}{Theorem}[section]
\let\oldproofname=\proofname
\renewcommand{\proofname}{\rm\bf{\oldproofname}}
\renewenvironment{proof}[1][\proofname]{%
	\par\pushQED{\qed}\normalfont%
	\topsep6\p@\@plus6\p@\relax
	\trivlist\item[\hskip\labelsep\bfseries#1\@addpunct{.}]%
	\ignorespaces
}{%
	\popQED\endtrivlist\@endpefalse
}
\DeclareMathOperator{\Tr}{Tr}
\DeclareMathOperator{\col}{col}
\newcommand{\mbtheta}{\boldmath{\theta}}
\newcommand{\mbx}{\mathbf{x}}
\newcommand{\mbu}{\mathbf{u}}
\newcommand{\mbf}{\mathbf{f}}
\newcommand{\mbe}{\mathbf{e}}
\newcommand{\mbF}{\mathbf{F}}
\newcommand{\mbSigma}{\pmb{\Sigma}}
\newcommand{\mbLambda}{\pmb{\Lambda}}
\newcommand{\mbsigma}{\pmb{\sigma}}
\newcommand{\mbmu}{\pmb{\mu}}
\newcommand{\mbomega}{\pmb{\omega}}
\newcommand{\BoxD}{\textsf{Box2D}}
\let\oldmb\mathbold
\protected\def\mathbold{\oldmb}
\newcommand*{\rom}[1]{\expandafter\@slowromancap\romannumeral #1@}
\begin{document}
	%

\begin{frontmatter}
	
	\title{Reinforcement Learning Using Expectation Maximization Based Guided Policy Search
		for Stochastic Dynamics} 
	
%

	\author[mymainaddress]{Prakash Mallick}
	
	\author[mymainaddress]{Zhiyiong Chen\corref{mycorrespondingauthor}}
	\cortext[mycorrespondingauthor]{Corresponding author}
	\ead{zhiyong.chen@newcastle.edu.au}
\author[mymainaddress,mysecondaryaddress]{Mohsen Zamani}

	\address[mymainaddress]{School of Electrical Engineering and Computing, \\
		University of Newcastle, Callaghan, NSW 2308, Australia}
	\address[mysecondaryaddress]{Department of Medical Physics and Engineering,\\
	Shiraz University of Medical Sciences, Shiraz, Iran.}
	
	\begin{abstract}
	Guided policy search algorithms have been  proven to work with incredible accuracy for not only controlling a complicated dynamical system, but also learning optimal policies from various unseen instances. One assumes true nature of the states in almost all of the well known policy search and learning algorithms. This paper deals with a  trajectory optimization procedure for an unknown dynamical system subject to measurement noise using expectation maximization
	and extends it to learning (optimal) policies which have less noise because of lower variance in the optimal trajectories. Theoretical and empirical evidence of learnt optimal policies of the new approach is depicted in comparison to some well known baselines which are evaluated on an autonomous system with widely used performance metrics.
	\end{abstract}
	
	\begin{keyword}
	Stochastic systems, expectation maximization, guided policy search,
		reinforcement learning, maximum likelihood, trajectory optimization.
	\end{keyword}
	
\end{frontmatter}


	%

	\section{Introduction}

	Probabilistic inference in reinforcement learning has received increased interest among not only systems and control but also 
	artificial intelligence communities. Researchers are specifically {interested in} handling control related tasks on real scenarios as well as generalizing their learnt policies to new behaviors through experience. Recently, reinforcement learning has been  proven to work with incredible accuracy and success to provide a solution to these problems.

	{Reinforcement} learning is widely used for solving a Markov decision process (MDP) or partially observable Markov decision process (POMDP) by optimizing a reward function  while learning the intelligent decisions. Policy search has been carried out in reinforcement learning setting
	for robotic applications such as manipulation tasks \cite{pastor2009learning,deisenroth2011learning,chebotar2017combining} and game playing \cite{kober2011reinforcement}.  In particular, model-based policy search problems have been addressed using trajectory optimization \cite{zhang2016learning,levine2016end}, analytical policy gradients \cite{deisenroth2011pilco,tedrake2004stochastic}, and information-theoretic approaches \cite{deisenroth2013survey,williams2017information}, while 
	model-free policy search has also been studied in, e.g., controlling a robot \cite{vlassis2009learning}. Deep model-free reinforcement learning approaches have been making considerable progress in solving very high dimensional control problems \cite{lillicrap2015continuous,schulman2015trust} but their very high sample complexity is a hindrance for many practical applications. 
	Therefore,  guided policy search  (GPS) \cite{levine2013guided} was proposed to 
	address the challenge of sample complexity with high dimensionality by dividing it into two problems: 
	i) a local model-based trajectory optimization step to produce guiding (expert) policies; and ii) a supervised learning process that utilizes local optimal policies as a guide to train a high-dimensional policy neural network. As a result it can successfully generalize
	polices for unseen scenarios with relatively less samples.  
	
	In the bulky literature hovering around the GPS framework established by \cite{levine2013guided}, there have been numerous variants such as  path-integral GPS  \cite{chebotar2017combining},
	path-integral linear quadratic regulator based GPS \cite{chebotar2017path}, mirror-descent GPS \cite{montgomery2016guided}, 
 Bregman alternating direction method of multipliers (BADMM) based GPS \cite{levine2016end}, model predictive GPS \cite{zhang2016learning}, and so on. All of these 
	variants focus directly on choosing intelligent decisions, as well as handling the generalization across task instances relatively well. However, they do  not deal with uncertainties of latent states especially in the trajectory-centric optimization phase, and 
	as a result the learning and generalization performance is affected. For example, although sensor observations
	are utilized to make intelligent decision during testing, they rely on the full states to carry out the guiding step which is 
	a restrictive limitation \cite{zhang2016learning}. 
	
	The aforementioned observation has motivated policy search in a partially observable framework.
	The early work by \cite{porta2006point} addressed the optimal control problem for solving POMDPs with a linear
	Gaussian transition model and a mixture of Gaussian reward model. Nevertheless, the approach requires an action space to be discretized. This limitation was handled  by the approach in \cite{toussaint2006probabilistic}. Substantial amount of work has been done in the field of policy search using {expectation maximization} (EM). For instance, the work in \cite{cooper2013method} concentrates on inference for decision making by  utilizing likelihoods and rewards for solving the inference problems. The EM technique has also been used for approximate inference in model-free learning \cite{toussaint2006probabilistic,hoffman2009expectation}, where 
	the cumulative sum of the expected rewards are maximized by a reward proportional predictive distribution. 
	In addition to that, 
	\cite{dayan1997using} utilized the concept of likelihood for solving an optimal control objective in a binary reward setting
	and  \cite{vlassis2009learning} proposed  a model free approach, i.e., Monte-Carlo EM to learn complex tasks for a real robot.
	More specifically, the latter exploits importance sampling to weight the reward factors produced from sampling the trajectories and then maximizes the likelihood of observing higher rewards.
	The recent work in \cite{prakasharxiv} has successfully provided an optimal control framework for handling partially observable nature of states. Nevertheless, most/all of these strategies lack in addressing the issue of generalization and learning in a model-based partially observed scenario, which leads to the motivation of this paper.

	In this paper, we provide a novel variant of the GPS algorithm which utilizes a maximum likelihood (ML) based optimal control to carry out learning in the presence of uncertainties (specifically arising from latency in states).
	We leverage a robust numerical implementation of EM which has been extensively used in system identification and extend it towards learning and generalization from unseen initial {conditions}. A theoretical analysis of covariance matrix has been developed, that intuitively quantifies less noise in the learnt policies. The performance of the proposed approach is also investigated based on the sample efficiency and success of generalization from multiple testing instances. Furthermore, the paper leverages strong empirical results to justify the claim that the EM-based GPS approach outperforms some of the well known variants of existing GPS algorithms on a set of synthetic data.
	
	The paper is organized in the following manner. Section~\ref{sec:preliminaries} lays out some fundamentals including some preliminaries, controller parameter space, and some assumptions. Section~\ref{sec:problem} sheds light on the problem formulation. Then, Section~\ref{method} explains thoroughly the EM-GPS approach, i.e., obtaining local dynamics model, trajectory optimization and policy learning. This section also provides a theoretical result in terms of singular values which quantifies the noise in the learnt policies.  Section~\ref{sec:results_leanring} evaluates the experimental results based on some well defined metrics and compares them with three GPS benchmarks evaluated on a \BoxD\ framework.  Section~\ref{conclusion} concludes the paper with some future extensions.

	\section{Background} \label{sec:preliminaries}
	\subsection{Preliminaries}
	
	The paper considers a reinforcement learning framework in which
	an agent interacts with a complicated environment by 
	making intelligent decisions based on a predefined objective function. The interaction leads to 
	nonlinear stochastic dynamics which do not have a valid model from the first principle. 
	The complicated dynamical system is referred to be called in this paper as a  global model (O) composed of multiple local models 
	$o^l,\;  l = \{1, 2, \cdots\}$, each of which follows a structure as shown in Fig~\ref{fig:MDP}.
	We are interested in devising methodologies for finite-horizon optimal control and learning (for all possible initial states) in the presence of noise. 
	A real system in the presence of uncertainties such as parameter variation, external disturbance, 
	and sensor noise creates latency in the underlying states of the system which propagates into the control action through the unknown dynamical equations. So each of the local models (of the unknown dynamical equation) can be modeled as a POMDP. It has a {\it latent state} $\mathbf{x}_{k} \in \mathbb{R}^{n_x}$ 
	and a {\it control action} $\mathbf{u}_k \in \mathbb{R}^{n_u}$, for each time instant $k = 1, 2, \cdots$, and the 
	local state transitional dynamics is represented with a conditional probability density function (p.d.f.), i.e., 
	\begin{align} \label{modelpdf}
	p (\mathbf{x}_{k+1} | \mathbf{x}_k, \mathbf{u}_k).
	\end{align}
	Specifically, for $k=1$,  $\mathbf{x}_1 \in \mathbb{R}^{n_x}$ follows some {\it initial state} distribution which is assumed to be known. 
	We specifically consider a finite-horizon  POMDP throughout this paper  for $k = 1, 2, \cdots, K$, called as an 
	{\it episode}, with the end of episode denoted by time $K$ being the end of episode.
	
	The entity $Y_k(\mathbf{x}_k,\mathbf{u}_k) \in \mathbb{R}^+$ denotes the instantaneous real valued {\it running cost} for executing action $\mathbf{u}_k$ at a state $\mathbf{x}_k$. 
	Precisely, it can be defined in a quadratic manner as,
	\begin{align} \label{quad_reward}
	Y_k(\mathbf{x}_k,\mathbf{u}_k) = (\mathbf{x}_k-\mathbf{x}^*)^\top \mathbf{Q_x} (\mathbf{x}_k-\mathbf{x}^*) + (\mathbf{u}_k-\mathbf{u}^*)^\top \mathbf{Q_u} (\mathbf{u}_k-\mathbf{u}^*),
	\end{align} where $\mathbf{x}^*$ and $\mathbf{u}^*$ are the target state and 
	control action, respectively, and  $\mathbf{Q_x} > 0$ and $\mathbf{Q_u}> 0$ are some specified matrices. We assume the  cost function is known but the transition function is unknown.
	As $\mathbf{x}_k$ and $\mathbf{u}_k$ are random variables,  
	$ Y_k(\mathbf{x}_k,\mathbf{u}_k)$  (with $Y_k$ a continuous and deterministic function) is also a random variable, shorted 
	as $Y_k$.  We develop another variable, i.e., $y_k \in \mathbb{R}^+$ (known as \textit{observed cost}) which   is described by a p.d.f. 
	$ p (y_k ( \mathbf{x}_k, \mathbf{u}_k )) $ or shorted as $p (y_k )$. We know that both $\mbx_k$ and $\mbu_k$ follow a Gaussian distribution, 
	therefore $Y_k$ follows a linear combination of independent non-central chi-squared variables with some degrees of freedom. 
	We simply assume that the p.d.f. of $Y_k$  follows an exponential distribution with parameter $\lambda$, i.e.,
	\begin{align}\label{pYk}
	p(Y_k) = \lambda e^{-\lambda Y_k} \text{ where }  \lambda>1.
	\end{align}
	Some relevant discussion can be found in  \cite{prakasharxiv,cooper2013method,dayan1997using,toussaint2009robot}.
	We also give a specific definition of $y_k$ as follows, 
	\begin{align} \label{expo_transformation}
	y_k  = e^{-  Y_k } ,
	\end{align}
	which utilizes an exponential transformation and is widely used in the literature for inference in optimal control scenarios; see, e.g., \cite{toussaint2009robot,peters2006policy,prakasharxiv}.
	
	Overall, the POMDP consists of a transition dynamics $p(\mathbf{x}_{k+1}|\mathbf{x}_k,\mathbf{u}_k)$ and the observation p.d.f. $p(y_k|\mathbf{x}_k,\mathbf{u}_k)$, i.e.,
	\begin{align}   \label{dmodel}
	{ p \Big(  \begin{bmatrix}
		\mbx_{k+1}       \\
		y_k      
		\end{bmatrix} | \mathbf{x}_k, \mathbf{u}_k \Big) },
	\end{align}
	which is referred to as the {\it dynamics model} further throughout the paper. 
	
	\subsection{Controller Parameter Space} 
	This subsection presents the definition of parameter space of a controller that is utilized in the paper. The control action is sampled from a
	linear Gaussian p.d.f.  shown below,
	\begin{align}
	\pi_{{\theta_k}}({\mathbf{u}}_k|\mathbf{x}_k) = \mathcal{N} ( \mbF_k \mathbf{x}_k + \mbe_k ,{\mbSigma_k} ),    \label{control}
	\end{align}
	for some matrices $\mbF_k, \mbSigma_k$ and a vector $\mbe_k$, representing state feedback control. 
	The matrix $\mbSigma_k$ is symmetric positive definite, and
	$\mbSigma_k^{\frac{1}{2}}$ is the square root of ${\mbSigma_k}$ satisfying 
	${\mbSigma_k}  =(\mbSigma_k^{\frac{1}{2}}) ^\top \mbSigma_k^{\frac{1}{2}} $. 
	Let $\mbf_k =  \text{vec} (\mbF_k)$ and
	$\mbsigma_k =  \text{vec} (\mbSigma_k^{\frac{1}{2}} )$.
	Then,  the vector 
	
	\begin{align*} 
	{\theta}_k =\col (\mbf_k, \mbe_k, \mbsigma_k),
	\end{align*} is called the {controller parameter}. 
	Over the episode under consideration, the controller parameters are lumped as follows, 
	\begin{align} \label{eq:theta_def}
	\theta  =\col (\theta_1, \theta_2, \cdots, \theta_K) \in \Theta,
	\end{align} where $\Theta$ is some non-empty convex set of parameters which is a closed and bounded subset of $({n_u n_xK+n_uK} + {n_u n_uK} )$-dimensional Euclidean space.
	The nonlinear feature of the controller is represented by the variation of $\theta_k$ with $k$, which 
	aims to account for the nonlinear complexity of the  dynamical model. 
	It is also assumed that the running cost $Y(\mbx_k,\mbu_k)$ is bounded from above.

	\begin{figure}[t]
		\centering
		\includegraphics[scale=0.24]{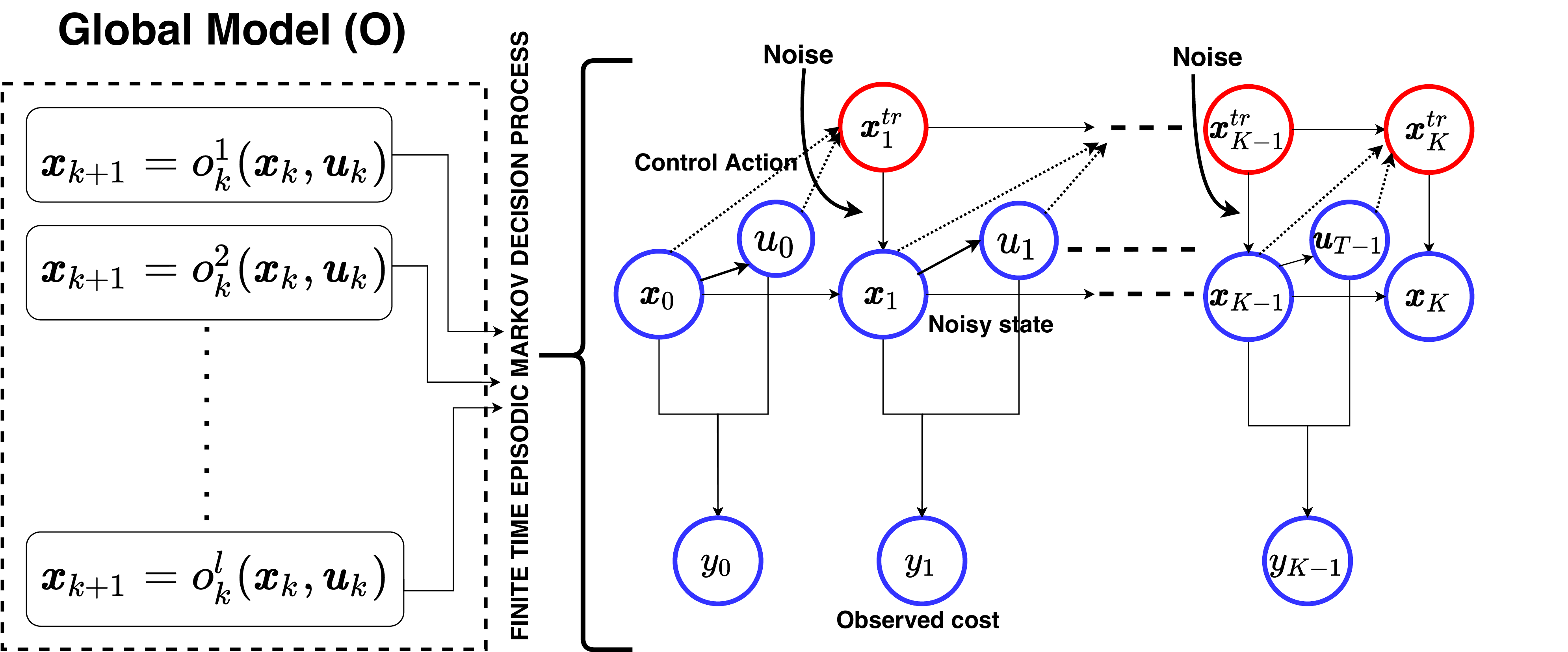}
		\caption{A global model divided into multiple local models and each of the local models follows an episodic MDP of $K$ steps. The red denotes the latent process and the blue denotes the observed process.}
		\label{fig:MDP}
	\end{figure}

	\section{Problem Statement} \label{sec:problem}
	The primary objective of this paper is to seek an optimal policy that achieves a stochastic optimal control objective for a system which is primarily subject to state uncertainties while learning the good policies which will serve as a guide for generalization from unseen initial conditions. Therefore we divide the objective briefly into two parts, i.e., Problems I and II.
	
	\subsection{Problem I}
	
	We first aim to solve for optimal control problem given as
	\begin{align} \label{ocp_action}
	{ \hat{\mbu}^*_{1:K}} = {\arg\min_{  {\mbu}_{1:K} }  \widetilde{V} ({\mbx}_{1},{\mbu}_{1:K}) }=  {\arg\min_{   {\mbu}_{1:K}}  \text{  } \mathbb{E} \text{  } [\sum_{k=1}^{K}  Y_k( {\mbx}_k, {\mbu}_k )}  ]
	\end{align}
	where $\widetilde{V} ({\mbx}_{1},{\mbu}_{1:K}) $ is the total sum of expected instantaneous costs $Y_k(\mbx_k,\mbu_k)$ under samples of all possible instantiations of next states arising as a result of noise in the system.
	Without loss of generality the objective function in terms of control action can be reduced to an optimization problem which solves for parameter $\mbtheta$ of the p.d.f. which governs the evolution of the control action sequence $  {\mbu}_{1:K} := \col (\mbu_1,\cdots,\mbu_K )$. 
	Therefore equivalently the stochastic optimal control objective function in terms of the parameter $\theta$ can be denoted as,
	\begin{equation} \label{obj_function_argmin}
	{ \hat{\theta}^*}= {\arg\min_{{\theta} \in \Theta}  \widetilde{V} ({\mbx}_1,{{\pi_\theta}}) }=  {\arg\min_{{\theta} \in \Theta}  \text{  } \mathbb{E} \text{  } [\sum_{k=1}^{K}  Y_k ( {\mbx}_k, {\pi_{\theta_k}} )} ].
	\end{equation}
	The term $\widetilde{V} ({\mbx}_1,{{\pi_\theta}})$ is the cumulative sum of the expected future returns and $\pi_{\theta_k} (\mbu_k |\mbx_k)$ is a non-stationary stochastic policy {parameterized by $\theta$} conditioned on state $\mbx_k$ as   \eqref{control}.
	Solving  \eqref{obj_function_argmin} is a global way of dealing with the stochastic optimal control problem and the globally optimal control law will be independent of starting initial state. However,  it is very hard to solve this kind of problems precisely. This is because in a reinforcement learning setting, solving POMDPs is theoretically proven to be NP-complete problem \cite{littman1994memoryless}. Therefore several approximations of value function have been developed in literature to tackle the complexity of the problem. For the rest of the paper, we are going to leverage one such analytical approximation that has been developed in \cite{prakasharxiv}.

	\begin{figure}[t]
		\centering
		\includegraphics[scale=0.25]{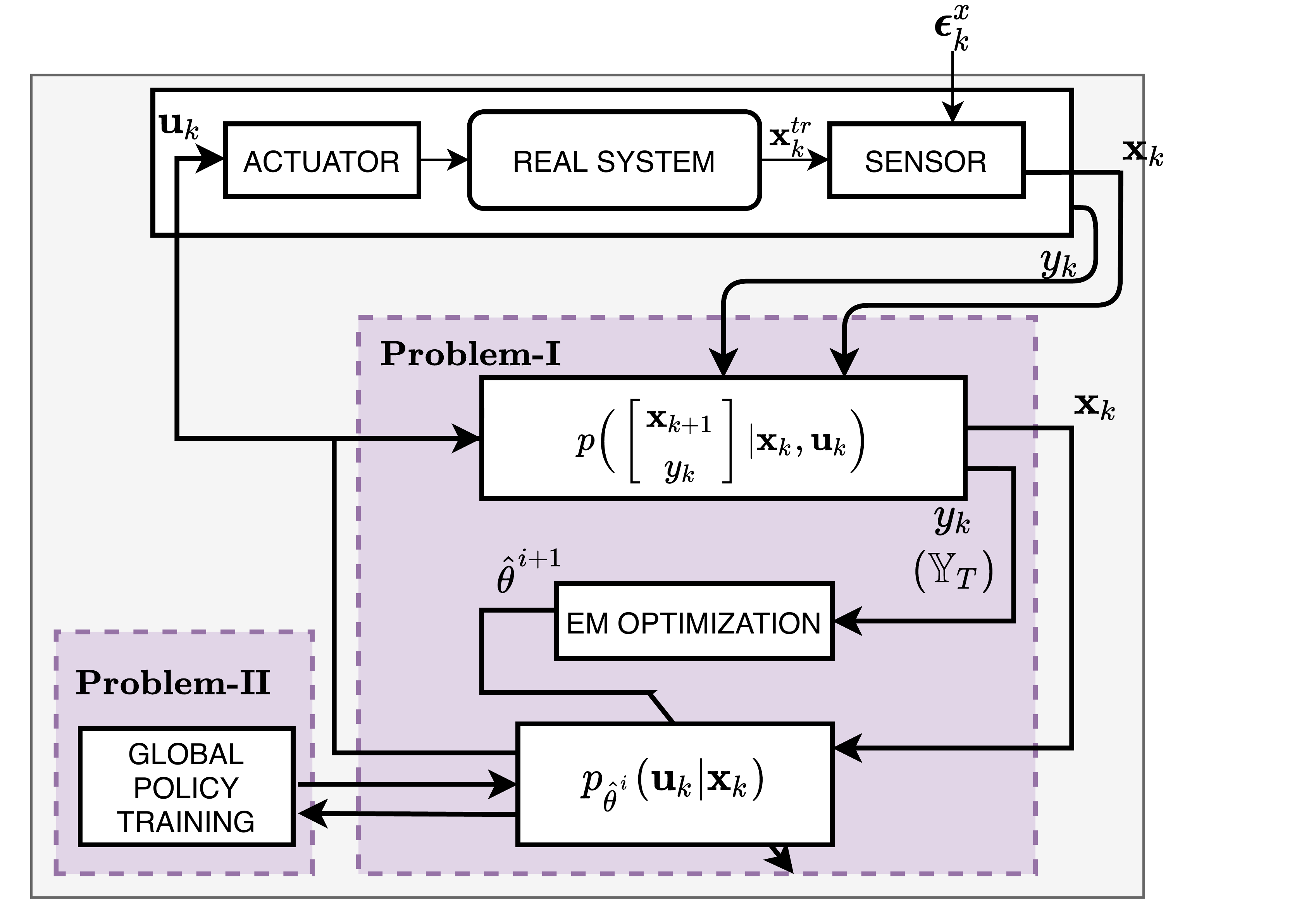}
		\caption{Block diagram representing the entire procedure of EM-GPS.}
		\label{fig:Blockdiag}
	\end{figure}
	
	In this approach, one deals with an objective function of the form \eqref{obj_function_argmin} but evaluated under a p.d.f which is obtained as a result of the EM algorithm (originally proposed by \cite{dempster1977maximum}). 
	The joint states vector, treated as {\it latent variables}, and  the cost vector, treated as {\it observations}, are shown as ${\mathbb{X}_{K+1}} \triangleq  \{{\mbx}_1,{\mbx}_2,\cdots,{\mbx}_{K+1}\}$ and ${\mathbb{Y}_{K}} \triangleq \{y_1,y_2,\cdots,y_{K}\}$, respectively. Then, 
	the log-likelihood of joint of  cost observations  and the log-likelihood of joint associated with cost and state pairs are represented as $ {  L_\theta ({\mathbb{Y}_K})} {\triangleq \log p_\theta(  {\mathbb{Y}_{K}} )}$ and ${ L_\theta ({\mathbb{X}_{K+1}} , {\mathbb{Y}_{K}}) } { \triangleq \log p_\theta(  {\mathbb{X}_{K+1}},{\mathbb{Y}_{K}}  )}$, respectively.
	We assume  $L_{{\theta} } (\mathbb{Y}_T)$ is bounded from above for ${\theta} \in \Theta$
	and the function $L_\theta(\cdot)$ is continuous in $\Theta$ and differentiable in the interior of $\Theta$.
	
	With these assumptions in mind, the stochastic optimal control version of the EM objective is studied in details in \cite{prakasharxiv}
	where  a {\it joint mixture likelihood} is optimized , i.e.,
	\begin{align}\label{jointlikelihood}
	\hat{\theta}^{i+1} = \arg\max_{\theta \in \Theta}  \mathcal{L}(\theta,\hat{\theta}^i),
	\end{align}
	for 
	\begin{align*}
	\mathcal{L}(\theta,\hat{\theta}^i) \triangleq    \mathbb{E}_{\hat{\theta}^i}   (  \log  p_\theta (\mathbb{X}_{K+1}, \mathbb{Y}_K) |\mathbb{Y}_K),
	\end{align*}
	where the parameter estimate $\hat{\theta}^i$ is some considerable parameter which one has knowledge of and 
	is recursively updated by increasing the likelihood.
	Throughout the paper, we use the simplified notation
	\begin{align*}
	\mathbb{E}_{\phi} ( * | \mathbb{Y}_K ) 
	\triangleq   {\mathbb{E}_{p_{\phi} (\mathbb{X}_{K+1}|\mathbb{Y}_K)} (  *) }.\end{align*}
	We refer to the problem of finding the optimal parameter $\hat{\theta}^*$  for  \eqref{obj_function_argmin}
	from the recursive approach \eqref{jointlikelihood}
	and in turn the optimal policy as Problem~I.  
	The result of Problem~I holds only for a limited set of initial conditions. Thus, to generalize the obtained optimal policy from {Problem} I, we switch the attention towards Problem~II.

	\subsection{Problem II}
	
	Problem~II deals with the issue of generalization, where the optimal parameters from the solution of Problem I are
	exploited and used to excite the real dynamical system to produce state marginals. We take a similar approach as taken by \cite{montgomery2016guided}, \cite{levine2013guided} and \cite{levine2016end} to utilize the samples of optimal (in the sense of optimization of Problem I) state marginals that act as guiding samples for the learning process.

	More specifically, we consider $C >1$ initial condition distribution $\mathcal{N}(\mathbf{s}^c,\mathbf{P}^c)$, $c=1,\cdots, C$.
	For each initial condition, the solution to Problem~I gives $\hat \theta_k^{i,c}$, recursively for $i=1, \cdots, I$.
	For each $\hat \theta_k^{i,c}$, the experiment repeats $S$ times with different initial states sampled from
	$\mbx_1^{i,c,s} \sim \mathcal{N}(\mathbf{s}^c,\mathbf{P}^c)$, for $s=1,\cdots, S$.
	For each state $\mbx_1^{i,c,s}$, the policy is
	\begin{align}
	{\mathbf{u}}_k^{i,c,s} \sim  \pi_{\hat{\theta}^{i,c}_k}({\mathbf{u}}_k|\mbx_k^{i,c,s}) = \mathcal{N} (
	\hat\mbF^{i,c}_k \mathbf{x}^{i,c,s}_k + \hat\mbe^{i,c}_k ,\hat{\mbSigma}^{i,c}_k ).  \end{align}
	Denote $\mbmu_k^{i,c,s} = \hat\mbF^{i,c}_k \mathbf{x}^{i,c,s}_k + \hat\mbe^{i,c}_k$. 
	For each iteration with $\hat{\theta}^{i}$, it gives
	$C S$  training samples  for the neural network  as follows,
	\begin{align}
	\{ \mathbf{x}^{i,c,s}_k,\;  \mbmu^{i,c,s}_k \}.
	\end{align}
	Then we utilize these samples in the next iteration to update the model, carry out the EM trajectory optimization and again generate samples from the state marginals which add more training data points to the previously trained neural network.

	Let $\zeta^L$ be the parameter vector of a Gaussian policy 
	\begin{align}
	\Pi_{\zeta^L} (\mbu_k |\mbx_k) \triangleq \mathcal{N}(\mbmu^L(\mbx_k), \mbSigma_k^L),
	\end{align} 
	where $\zeta^L$ represents the parameters of the  neural network generating $\mbmu^L(\mbx_k)$.
	Then, the parameter $ \mbSigma_k^L$ will be explicitly calculated and 
	the parameter $\zeta$  of the neural network is to be trained. Specifically, 
	the supervised learning objective function in the training process can be expressed as, for every $i$, 
	\begin{align} \label{closedformexpression}
	{\zeta}^{L,i} =  \arg\min_{\zeta}  \sum_{k=1}^K \sum_{c=1}^C \sum_{s=1}^S D_{KL} \Big( \Pi_{\zeta}
	({\mathbf{u}}_k|\mbx_k^{i,c,s}) || \pi_{\hat{\theta}^{i,c}_k}({\mathbf{u}}_k|\mbx_k^{i,c,s})  \Big) ,
	\end{align}  
	where $D_{\text{KL}} (\cdot)$ represents KL-divergence from the p.d.f. $\Pi_\zeta(\mbu_k|\mbx_k)$ to the p.d.f. $\pi_{\theta}(\mbu_k|\mbx_k)$. The term $\pi_{\hat{\theta}^{i,c}_k}({\mathbf{u}}_k|\mbx_k^{i,c,s}) $ represents the local control policy
	whose samples supervise/guide the global policy $\Pi_{\zeta}
	({\mathbf{u}}_k|\mbx_k^{i,c,s})$ which can be considered to be a neural network  . 
	
	Basically, there are two phases for the solution to \eqref{closedformexpression}. 
	One is to find the mean through an NN using standard supervised learning. 
	The  other is to find  the covariance estimate by utilizing identities of multivariate Gaussian and then setting the gradient of \eqref{closedformexpression} to $0$.

	Problems I and II together form the Expectation Maximization variant of the GPS approach, shorted as EM-GPS,
	which addresses the optimal policy search and learning when there is substantial impact of measurement noise. 
	This methodology can be graphically represented by the block diagram in Fig.~\ref{fig:Blockdiag}.
	The explicit EM-GPS process will be elaborated in the next section.  
	
	\section{The EM-GPS Approach} \label{method}
	
	The EM-GPS approach consists of two major steps with each corresponding to 
	the aforementioned two problems. 
  Before we introduce the two steps, 
		we need to explore a specific Gaussian p.d.f. for the POMDP model \eqref{dmodel} 
		using well established principles of system identification.

	\subsection{Dynamics Fitting and Cost Observation}

	The paper deals with a  locally time-varying linear model \eqref{dmodel}  of the form,
	\begin{equation} \label{eq3}
	{ p \Big(  \begin{bmatrix}
		{\mbx}_{k+1}       \\
		y_   k   
		\end{bmatrix} |  {\mbx}_k , {\mbu}_k  \Big) }   = \mathcal{N} \Big(  \mathbf{A}^P_k
	{ \begin{bmatrix}
		{\mbx}_{k}       \\
		{\mbu}_k       
		\end{bmatrix} }  ,   \pmb{\Sigma}^P_k  \Big).
	\end{equation}
 with the parameters constructed in a data-driven manner by fitting obtained datasets to the above equation.
	A variational-Bayesian (see e.g., \cite{bishop2006pattern}-Sec 10.2) approach, motivated by 
	\cite{khansari2010bm,levine2013guided}, is used here 
	to determine the prior for successful dynamics fitting.

	One can run one iteration of experiment and collect tuples of measured 
	$\{\mathbf{x}_k,\;\mathbf{u}_k,\;\mathbf{x}_{k+1},\;y_k\}$ for one episode $k=1,\cdots, K$. Practically, 
	one can repeat the experiments for $M$ times from the same initial conditions with a random seed value to gather sufficiently many samples, each of which is denoted by 
	\begin{align*}
	\mathcal{D}_k^m=\{\mathbf{x}_k,\;\mathbf{u}_k,\;\mathbf{x}_{k+1},\;y_k\}_{\text{$m$-th experiment}},
	\end{align*}
	for $m=1,\cdots,M$.  Let 
	$\mathcal{D}_k = \{\mathcal{D}_k^1,\cdots, \mathcal{D}_k^M \}$ 
	and $\mathcal{D} = \{\mathcal{D}_1,\cdots, \mathcal{D}_K \}$.
	Then, one can fit a Gaussian mixture model (GMM) to the data set  $\mathcal{D}_k$. 
	In particular, the VB inference method is used to determine the parameters of the GMM, i.e., the means, covariances and weights of the Gaussians. 
	
	The  GMM which is produced as a result of VB inference acts as a considerable global prior and it helps in bringing in information to construct a solitary normal-inverse Wishart (NIW) distribution. This NIW acts as a conjugate prior for a Gaussian distribution 
	\begin{align} \label{NIWp}
	p (\mathbf{x}_k,\mathbf{u}_k,\mathbf{x}_{k+1},y_k) = \mathcal{N} (\mbomega_k, \mbLambda_k).
	\end{align}
	Next, it will be elaborated that the NIW prior plays an essential role in attaining the parameters, i.e., the mean $\mbomega_k$ and the covariance ${\mbLambda}_k$.

	The procedure of fitting GMM to $\mathcal{D}$ involves constructing NIW distributions to act as prior for means and covariances of Gaussian distributions involved in mixture model. In addition to it, Dirichlet distributions are defined to be the prior on the weights of the Gaussian distributions which would explain the mixing proportions of Gaussians. {Then, iterative VB strategy is adopted} to increase the likelihood of joint variational distribution (see e.g., \cite{bishop2006pattern}-Section 10.2) for attaining the parameters of GMM. 
	The attained parameters of GMM are utilized to further obtain the parameters of the solitary NIW prior which acts  as a representative of the global GMM prior. The purpose of NIW prior is to garner the information contained in the global GMM prior.  The mean of the solitary NIW prior is $\mbomega^0_k = \sum_{f=1}^{F} ( w^f_k  \mbomega^f_k )$ where $w^f_k$ and $\mbomega^f_k$ are the weight and mean of the $f$-th  Gaussian in the GMM and ${F}$ is the total number of initialized Gaussian clusters.  The precision matrix $\mbLambda^0_k$ of the solo NIW prior is evaluated by calculating the deviation of each cluster from $\mbomega^0_k$. There are two more essential parameters of the solo NIW conjugate prior namely $n_0$ and $k_0$, which are set for the total $M$ samples. 
	Define the empirical mean, $\mbomega^{emp}_{k}\in \mathbb{R}^{n_u+2n_x+1}$ and the covariance $\pmb{\Lambda}^{emp}_k \in \mathbb{R}^{{(n_u+2n_x+1)}\times {(n_u+2n_x+1)}}$ as follows, for the data set $\mathcal{D}_k$, 
	\begin{align}
	\mbomega^{emp}_{k} &=\frac{1}{M}\sum_{m=1}^{M} {\mathcal{D}^m_k} 
	{\mbLambda}^{emp}_k = \frac{1}{M} {\sum_{m=1}^M}   ( \mathcal{D}^m_k  -{{\mbmu_k^{emp}}}) ( \mathcal{D}^m_k  -{{\mbmu_k^{emp}}})^{\top}.  \label{emp}
	\end{align}
	Next, one can carry out Bayesian update that results in a-posteriori estimates of the mean and precision matrix
	for the solo Gaussian in \eqref{NIWp}, that is, 
	\begin{align} \label{Bayesian}
	\mbomega_k=\frac{{ k_0  \mbmu^0_k} + M \mbmu^{emp}_k  }{k_0+M}  ,\;
	{\mbLambda}_k =   \frac{{ (\mbLambda^0_k)^{-1} } + M\cdot \pmb{\Lambda}^{emp}_k + \kappa_k }{M+n_0} 
	\end{align}
	where $\kappa_k =   [k_0 M / (k_0 + M)] ({\mbmu^{emp}_k}-\mbmu^0_k) ({\mbmu^{emp}_k}-{\mbmu^0_k})^{\top}$.

	The Gaussian distribution  \eqref{NIWp} can be conditioned on states and action, i.e., $(\mathbf{x}_k,\mathbf{u}_k)$, using standard identities of multivariate Gaussians, which delivers
	the following parameters of \eqref{eq3} i.e.,
	
	\begin{align*}
	\mathbf{A}^P_k = \begin{bmatrix}
	\mathbf{A}^d_k      & {\mathbf{B}^d_k} \\
	{\mathbf{A}^y_k}       & {\mathbf{B}^y_k} 
	\end{bmatrix} ,  \pmb{\Sigma}^P_k = {\begin{bmatrix}
		{\pmb{\Sigma}^d_k} & {{\pmb{\Sigma}^{yd}_k}} \\
		{\pmb{\Sigma}^{yd}_k}^\top      & { {\pmb{\Sigma}^y_k}}
		\end{bmatrix}  } .
	\end{align*}
 The dimensions of matrices are  $\mathbf{A}^d_k \in \mathbb{R}^{n_x \times n_x}$, ${\mathbf{B}^d_k} \in \mathbb{R}^{n_x \times n_u}$, ${\pmb{\Sigma}^d_k} \in \mathbb{R}^{n_x \times n_x}$, ${{\mathbf{A}^y_k}} \in \mathbb{R}^{1 \times n_x}$, ${\mathbf{B}^y_k} \in \mathbb{R}^{1 \times n_u}$, ${ {\pmb{\Sigma}^y_k}} \in \mathbb{R}$, $ \mathbf{A}^P_k \in \mathbb{R}^{({n_x + 1}) \times (n_u +n_x) }$ and ${ {\pmb{\Sigma}^P_k} }  \in \mathbb{R}^{(n_x+1) \times (n_x+1)}$.
	In the dynamical model \eqref{eq3}, the term $ {\pmb{\Sigma}^{yd}_k}$ denotes the correlation between $\mathbf{x}_{k+1}$ and $y_k$. Without loss of generality, one can assume that  ${\pmb{\Sigma}^{yd}_k}=0$. Note that one can also consider ${\pmb{\Sigma}^{yd}_k} \neq 0$ and utilize methods of de-correlation to carry out the entire procedure in a similar way. It is assumed that the covariance matrices are symmetric positive definite, that is, $ {\pmb{\Sigma}^d_k} > 0$,  ${ {\pmb{\Sigma}^y_k}}> 0 $, and  ${\pmb{\Sigma}^P_k}> 0$. It is also assumed that the pair $({\mathbf{A}^d_k}\text{, }{\mathbf{B}_k^d})$ is controllable 
	and $ {\mathbf{B}_k^d}^\top \mathbf{B}_k^d >0 $.

	Now, it is ready to propose the EM-based trajectory optimization policy
	and the policy learning strategy in the next two subsections, respectively. 
	
	%
	%

	\subsection{EM Optimization Policy - Problem I} \label{robust_optimization}
	
	This subsection aims to elaborate the method of  finding the optimal parameter $\hat{\theta}^*$  for  \eqref{obj_function_argmin} from the recursive approach \eqref{jointlikelihood} with an  initial known parameter estimates $\hat{\theta}^0$.
	The effectiveness of the approach has been extensively studied in \cite{prakasharxiv} based on the 
	relationship between the stochastic optimal control objective function in \eqref{obj_function_argmin}
	and the maximum likelihood objective function in  \eqref{jointlikelihood}.
	
	More specifically, 
	with the expectation carried out under samples of p.d.f. from a known parameter estimates $\hat{\theta}^{i}$,
	the iteratively updated $\hat{\theta}^{i+1}$ that increases the joint mixture likelihood function $\mathcal{L}(\theta,\hat{\theta}^i)$
	also decreases the (approximated) cost-to-go from some initial state $\mbx_1$, that is,    
	\begin{align*}
	\mathcal{L}(\hat\theta^{i+1},\hat{\theta}^i) \geq  \mathcal{L}(\hat\theta^{i},\hat{\theta}^i)  \implies V(\mbx_1, \pi_{\hat{\theta}^{i+1}}) \leq  V(\mbx_1, \pi_{\hat{\theta}^{i}})   ,
	\end{align*}
	where $V(\mbx_1, \pi_{{\theta} }) \triangleq \mathbb{E}_{\hat{\theta}^i}  [\sum_{k=1}^{K}  Y_k ( {\mbx}_k, {\pi_{\theta_k}} ) ]$ is an approximated surrogate function for $ \widetilde{ V }(\mbx_1, \pi_{{\theta} })$ \cite{prakasharxiv}.
	Attention is then turned towards evaluation of  the mixture likelihood $\mathcal{L}(\theta,\hat{\theta}^i)$. In fact, utilizing the Gaussian assumption associated with (6), 
	the likelihood function $\mathcal{L}(\theta, \hat{\theta^i})$ for the dynamical model  \eqref{eq3} and the controller \eqref{control} 
	can be evaluated by a time-varying linear Kalman filter and R.T.S. smoother components.  
	
	As a result, the proposed optimization paradigm aims to seek a better policy parameter  $\theta =\hat\theta^{i+1}$ for the next iteration
	than {$\theta =\hat\theta^{i}$} in the sense of maximizing (or increasing)
	$\mathcal{L} (\theta,\hat\theta^i)$, that is, 
	\begin{align} \label{optiphi}
	\hat\theta^{i+1}  = \arg\max_{\theta} \mathcal{L} (\theta, \hat\theta^i).
	\end{align}
	which, however, is typically difficult to compute. 
	Two practically effective methods were introduced in \cite{prakasharxiv}.

 We define a so-called information matrix  
	\begin{align} 
	\mathcal{I}  (\hat{\theta}^i, \hat{\theta}^{i+1}) &=
	\mathbf{I} -  \Big[ (-{ \nabla^2_{ \theta} \mathcal{L}(\theta, \hat{\theta}^i) \big|_{ \theta=  \hat{\theta}^{i+1}} })^{-1}    (-{  \nabla^2_{ \theta}  L_\theta (\mathbb{Y}_K) \big|_{\theta=\hat{\theta}^i}    } ). \Big]
	\end{align} 
	which can be utilized for analyzing the convergence of estimates of the EM algorithm. 
	Some particular interest is about the convergence for the covariance matrices.
	In the context of time-varying optimal control, the information matrix contains $\hat{\theta}^i = \col(\hat{\theta}^i_1,\cdots,\hat{\theta}^i_K)$ where $\hat{\theta}^i_k = \col(\hat{\mbf}^i_k ,\hat{\mbe}^i_k,\hat{\mbsigma}^i_k)$ and denoting $\hat{\mbsigma}^i = \col(\hat{\mbsigma}^i_1,\cdots,\hat{\mbsigma}^i_K)$.
	The principal minor of the $ \mathcal{I}  (\hat{\theta}^i, \hat{\theta}^{i+1})$  concerned with the covariance components 
	$\hat{\mbsigma}^i $ has the following inequality,
	\begin{align*}  
	\mathbf{0}    \leq \mathcal{I} _\Sigma (\hat{\theta}^i, \hat{\theta}^{i+1}) \leq \mathbf{I}.
	\end{align*}
	Then, the convergence of the recursive EM algorithm is assured with the following update law
	for the covariance matrices
	\begin{align}
	{\hat\mbSigma_k}^i  & =({\hat\mbSigma_k} ^{\frac{1}{2} i}) ^\top {\hat\mbSigma_k} ^{\frac{1}{2} i} ,\; \hat\mbsigma_k^i =  \text{vec} (\hat\mbSigma_k^{\frac{1}{2}i}) 
	\nonumber\\
	\hat\mbsigma^{i+1}  &=    \mathcal{I}_\Sigma  (\hat{\theta}^i, \hat{\theta}^{i+1}) \hat\mbsigma^{i}. \label{Sigmareduction}
	\end{align}

	\subsection{GPS Based on Supervised Learning - Problem II}
	
	In the proposed variant of GPS, we train the final nonlinear policy in a similar architecture as described in \cite{levine2016end,zhang2016learning},  but utilize the guiding samples of EM based optimal policies generated as a result of \eqref{jointlikelihood}.
	As the approach employs a similar supervised learning procedure, it inherits all the advantages of the GPS described in 
	the existing references such as \cite{levine2013guided,montgomery2016guided,chebotar2017combining,zhang2016learning}. 
	It also inherits the advantage of maximum likelihood strategies investigated  in \cite{prakasharxiv,gibson2005maximum} for  handling the latency in the states. 
	
	The policy optimization objective \eqref{closedformexpression} corresponds to minimizing the $\text{KL}$ divergence between global conditional policies $\Pi_\zeta(\mbu_k|\mbx_k)$ and the guiding optimized policies $\pi_{\hat{\theta}_k^{i}}(\mbu_k|\mbx_k)$.
	Because of the multivariate Gaussian formulation of the policies, the optimization can be expressed in a closed form 
	as follows, for $\zeta =\zeta^{L,i}$ at each iteration, 
	\begin{align}
	&  D_{KL} \Big( \Pi_{\zeta^{L,i}}
	({\mathbf{u}}_k|\mbx_k^{i,c,s}) || \pi_{\hat{\theta}^{i,c}_k}({\mathbf{u}}_k|\mbx_k^{i,c,s})  \Big) \nonumber \\
	= & \frac{1}{2} \log |\mbSigma_k^{L}|  -  \frac{1}{2} \Tr [({\hat{\mbSigma}^{i,c}_k})^{-1} 
	\mbSigma^L_k] \nonumber\\ & + \frac{1}{2} ({\mbmu}^{L,i}(\mbx_k^{i,c,s})- \mbmu_k^{i,c,s}) )^\top ({\hat{\mbSigma}_k^{i,c}})^{-1}  ({\mbmu}^{L,i}(\mbx_k^{i,c,s})- \mbmu_k^{i,c,s}) ). \label{eq:covar_learning}
	\end{align}
	Trained at each iteration based on the optimization objective \eqref{closedformexpression},
	the neural network is represented by the function  
	$\mbmu^{L,i}(\cdot)$. 
	As the final term of \eqref{eq:covar_learning} containing $\mbmu^{L,i}(\cdot)$
	is a weighted quadratic cost on the policy mean, it  can be learnt using standard supervised learning. 
	The aforementioned supervised learning approach inherits the foundation laid out in \cite{montgomery2016guided,chebotar2017combining}, etc. A major advantage of this approach can be attributed to
	the samples that come from a high-dimensional global policy while the (local) optimization  is carried out in a low-dimensional action space.

	It is noted that
	$ {\mbSigma}^L_k$ used in \eqref{eq:covar_learning} is fixed,  independent of the recursion $i$.
	Its expression is given below in a closed form by taking the derivative w.r.t. $\mbSigma^L_k$ and setting it to $0$ (see, e.g., \cite{montgomery2016guided}), i.e.,
	\begin{align}\label{eq:learning_covariance}
	{\mbSigma}^L_k = \left( \frac{1}{C}  \sum_{c=1}^{C}  ( \hat{\mbSigma}_{k}^{I,c}  )^{-1} \right)^{-1}.
	\end{align}
	Therefore the covariances of the global policy is updated by averaging the covariances of the local policies over the rollouts and initial conditions. In practice,  a simple diagonalization transformation works empirically well according to 
	\begin{align} \label{covariance_transformation}
	{\mbSigma}^L_k= 
	\left( \frac{1}{C}  \sum_{c=1}^{C}   ( \mbox{diag} (\sigma(\hat{\mbSigma}_{k}^{I, c} )))^{-1} \right)^{-1}.
	\end{align} 
	%
	%
	The next theorem provides a unique relationship between the adapted covariance matrix ${\mbSigma}^L_k$ obtained 
	from \eqref{covariance_transformation} and the covariance estimate $\hat{\mbSigma}_k^i$. 
	The result explicitly provides an evidence that the produced samples contain less noise, which means more action certainty in the excited dynamic system.
	
	%
	
	\begin{theorem} \label{thm:learning}
		Consider ${\mbSigma}^L_k$ satisfying  \eqref{covariance_transformation} where every 
		$\hat{\mbSigma}_{k}^{i, c}$, $i=1,\cdots, I$, is achieved by \eqref{Sigmareduction} with an initial $\hat{\mbSigma}_{k}^{0, c}$. 
		Then, 
		\begin{align*}
		\sum_{k=1}^K \Tr({\mbSigma}^L_k)^{-1}    &\geq   \frac{n_u^2 K^2}{C}   \sum_{c=1}^C   \left(
		{\sum_{k=1}^K  \Tr \hat{\mbSigma}^{0, c}_{k}} \right)^{-1}
		\end{align*}

	\end{theorem}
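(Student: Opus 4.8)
The plan is to reduce the statement to two applications of the arithmetic-mean--harmonic-mean (AM--HM) inequality, bridged by a monotonicity property extracted from the update law \eqref{Sigmareduction}. First I would rewrite the left-hand side in terms of the final iterates. Since \eqref{covariance_transformation} gives $({\mbSigma}^L_k)^{-1} = \frac{1}{C}\sum_{c=1}^C (\mbox{diag}(\sigma(\hat{\mbSigma}^{I,c}_k)))^{-1}$ and each $\hat{\mbSigma}^{I,c}_k$ is symmetric positive definite (so its singular values coincide with its eigenvalues), the trace of the diagonalized inverse equals the trace of the true inverse, $\Tr(\mbox{diag}(\sigma(\hat{\mbSigma}^{I,c}_k)))^{-1} = \Tr(\hat{\mbSigma}^{I,c}_k)^{-1}$. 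Using linearity of the trace over the average and summing over $k$, I obtain the identity $\sum_{k=1}^K \Tr({\mbSigma}^L_k)^{-1} = \frac{1}{C}\sum_{c=1}^C \sum_{k=1}^K \Tr(\hat{\mbSigma}^{I,c}_k)^{-1}$, which isolates a per-$c$ quantity that can be bounded separately.

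Next I would establish the key monotonicity: the aggregate trace of the covariance is non-increasing across EM iterations. Because the principal minor satisfies $\mathbf{0} \leq \mathcal{I}_\Sigma(\hat\theta^i,\hat\theta^{i+1}) \leq \mathbf{I}$ in the positive semidefinite ordering, it is symmetric with eigenvalues in $[0,1]$ and hence has spectral norm at most one; applying it in \eqref{Sigmareduction} gives $\|\hat\mbsigma^{i+1}\| \le \|\hat\mbsigma^{i}\|$. Since $\hat\mbsigma^i = \col(\hat\mbsigma^i_1,\dots,\hat\mbsigma^i_K)$ with $\hat\mbsigma^i_k = \text{vec}(\hat{\mbSigma}^{\frac{1}{2} i}_k)$, the squared norm unpacks as $\|\hat\mbsigma^i\|^2 = \sum_{k=1}^K \|\hat{\mbSigma}^{\frac{1}{2} i}_k\|_F^2 = \sum_{k=1}^K \Tr(\hat{\mbSigma}^{i}_k)$. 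Chaining the contraction from $i=0$ to $i=I$ then yields, for every $c$, $\sum_{k=1}^K \Tr(\hat{\mbSigma}^{I,c}_k) \le \sum_{k=1}^K \Tr(\hat{\mbSigma}^{0,c}_k)$, which is the precise sense in which the learnt policies carry less noise.

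Finally I would apply AM--HM twice. For a fixed $c$ and $k$, writing the eigenvalues of $\hat{\mbSigma}^{I,c}_k$ as $\lambda_1,\dots,\lambda_{n_u}$, AM--HM gives $\Tr(\hat{\mbSigma}^{I,c}_k)^{-1} = \sum_j \lambda_j^{-1} \ge n_u^2 / \Tr(\hat{\mbSigma}^{I,c}_k)$. Summing over $k$ and applying AM--HM a second time to the $K$ positive numbers $a_k := \Tr(\hat{\mbSigma}^{I,c}_k)$ gives $\sum_k \Tr(\hat{\mbSigma}^{I,c}_k)^{-1} \ge n_u^2 \sum_k a_k^{-1} \ge n_u^2 K^2 / \sum_k a_k$. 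Invoking the monotonicity bound $\sum_k a_k \le \sum_k \Tr(\hat{\mbSigma}^{0,c}_k)$, which only increases the lower bound since $t \mapsto n_u^2 K^2/t$ is decreasing, produces the per-$c$ estimate $\sum_k \Tr(\hat{\mbSigma}^{I,c}_k)^{-1} \ge n_u^2 K^2 (\sum_k \Tr\hat{\mbSigma}^{0,c}_k)^{-1}$. Averaging over $c$ and substituting the identity from the first step delivers the claim. The main obstacle I anticipate is the monotonicity step: one must argue that contraction of the \emph{stacked} square-root vector $\hat\mbsigma$, rather than each $\hat{\mbSigma}_k$ individually, still controls the aggregate $\sum_k \Tr(\hat{\mbSigma}_k)$, and take care that the Frobenius-norm identity correctly turns the vectorized square roots back into traces of the covariances; the two AM--HM applications are then routine.
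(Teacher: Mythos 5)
Your proof is correct and follows the paper's skeleton --- rewrite $\sum_{k=1}^K \Tr({\mbSigma}^L_k)^{-1}$ via \eqref{covariance_transformation} as an average over $c$ of sums of reciprocal eigenvalues, apply the arithmetic--harmonic mean inequality, then invoke monotonicity of $\sum_{k=1}^K \Tr \hat{\mbSigma}^{i,c}_{k}$ across EM iterations --- but it departs from the paper in one substantive way. Cosmetically, you apply AM--HM twice (over the $n_u$ eigenvalues of each matrix, then over the $K$ time steps) where the paper applies it once to all $n_u K$ eigenvalues; both yield the identical constant $n_u^2 K^2$. Substantively, the paper does not prove the monotonicity step at all: it cites Theorem VI.4 of \cite{prakasharxiv} for $\sum_{k,q}\sigma^{I,c}_{k,q} \leq \sum_{k,q}\sigma^{0,c}_{k,q}$, whereas you derive it internally from the update law \eqref{Sigmareduction}, using the bound $\mathbf{0} \leq \mathcal{I}_\Sigma \leq \mathbf{I}$ to get spectral norm at most one, chaining to obtain $\|\hat\mbsigma^{I}\| \leq \|\hat\mbsigma^{0}\|$, and converting norms to traces through the identity $\|\hat\mbsigma^{i}\|^2 = \sum_{k=1}^K \|\hat{\mbSigma}^{\frac{1}{2}i}_{k}\|_F^2 = \sum_{k=1}^K \Tr \hat{\mbSigma}^{i}_{k}$, which is valid precisely because the paper defines $\hat{\mbSigma}^{i}_{k} = (\hat{\mbSigma}^{\frac{1}{2}i}_{k})^\top \hat{\mbSigma}^{\frac{1}{2}i}_{k}$. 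This makes your argument self-contained, which is a genuine gain in transparency over the citation. One caveat: your inference from \textquotedblleft eigenvalues in $[0,1]$\textquotedblright\ to \textquotedblleft spectral norm at most one\textquotedblright\ requires $\mathcal{I}_\Sigma$ to be symmetric; since $\mathcal{I}$ is constructed as a product of two Hessians it need not be symmetric in general, so you are implicitly reading the paper's assertion $\mathbf{0} \leq \mathcal{I}_\Sigma \leq \mathbf{I}$ as a Loewner-order (hence symmetric) statement. That reading is the natural one, but it means your monotonicity derivation rests on the same unproved assertion that the paper delegates to Theorem VI.4 of \cite{prakasharxiv}, rather than replacing it with something unconditional.
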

	
	\begin{proof}
		Let $\sigma^{i,c}_{k,q}$, $q=1,\cdots, n_u$, be the eigenvalues of 
		$\hat{\mbSigma}^{i, c}_{k}$, for $i=0$ and $i=I$.
		From \eqref{covariance_transformation}, one has
		\begin{align*}
		\sum_{k=1}^K \Tr({\mbSigma}^L_k)^{-1} &= \frac{1}{C} \sum_{k=1}^K  \sum_{c=1}^C \Tr \mbox{diag} (\sigma (\hat{\mbSigma}^{I, c}_{k})))^{-1}  
		= \frac{1}{C}   \sum_{c=1}^C  \sum_{k=1}^K \sum_{q=1}^{n_u} \frac{1}{\sigma^{I,c}_{k,q}}.
		\end{align*}
		By the arithmetic-harmonic mean inequality, i.e., for $a_i >0$, 
		\begin{align*} \sum_{i=1}^N \frac{1}{a_i}
		\geq \frac {N^2}{\sum_{i=1}^N a_i}, \end{align*}
		one has the following result
		\begin{align*}
		\sum_{k=1}^K \Tr({\mbSigma}^L_k)^{-1} 
		& \geq \frac{1}{C}  \sum_{c=1}^C   \frac{n_u^2 K^2}{\sum_{k=1}^K \sum_{q=1}^{n_u} \sigma^{I,c}_{k,q}} .
		\end{align*}
		Then,  by Theorem VI.4 of \cite{prakasharxiv}, one has
		\begin{align*}\sum_{k=1}^K \sum_{q=1}^{n_u} \sigma^{I,c}_{k,q} \leq \sum_{k=1}^K \sum_{q=1}^{n_u} \sigma^{0,c}_{k,q}
		\end{align*}
		and hence
		\begin{align*}
		\sum_{k=1}^K \Tr({\mbSigma}^L_k)^{-1}    &\geq \frac{1}{C}  \sum_{c=1}^C    \frac{n_u^2 K^2}{\sum_{k=1}^K  \sum_{q=1}^{n_u} \sigma^{0,c}_{k,q}} 
		= \frac{n_u^2 K^2}{C}    \sum_{c=1}^C     \frac{1}{\sum_{k=1}^K  \Tr \hat{\mbSigma}^{0, c}_{k} }. 
		\end{align*}
		The proof is completed.  
	\end{proof}

	\section{Simulation Results and Analysis}  \label{sec:results_leanring}
	
	This section describes the empirical results on synthetic simulation data.
	The algorithm based on EM based trajectory optimization and training of a high dimensional policy network for a global initial 
	state space has been extensively discussed. Its performance is evaluated on some common metrics with respect to 
	three existing variants of GPS in the references \cite{chebotar2017combining}, \cite{chebotar2017path}, and  \cite{montgomery2016guided}. The results are reported in this section.


	\subsection{Evaluation Platform and Baselines Methods}
	
We conduct the experiments on a benchmark GPS  \BoxD\ framework which describes a two-dimensional 
autonomous  system of second order point-mass dynamics where the ultimate goal is to be driven from any random initial condition to a target $[5, 20]$ in the Cartesian coordinate where the object is subject
		to gravity and linear damping. The training is done based on samples from two initial conditions i.e., $[0,5]$ and $[2,5.5]$ and we want the algorithm to learn and generalize to other test initial conditions (possibly in the neighborhood of the trained initial conditions). The state space of the system
	consists of the position and velocities, i.e., $\mbx = [x, y, \dot{x} , \dot{y} ]^\top \in \mathbf{R}^4$, and the control action $\mbu=[u_1,u_2]^\top \in \mathbf{R}^2$ denoting the acceleration profile. We deliberately add noise to the states at the start of the experiment, so as to finally verify the efficiency of the learning process in terms of both noise resilience and target reachability. The added noise is quantified as $\mbx_k=\mbx_k^{tr}+\pmb{\epsilon}_k^x$ where $\mbx_k^{tr}$ is the true state,  $\pmb{\epsilon}_k^x \sim \mathcal{N}(0,\mathbf{N}^x_k)$ and $\mathbf{N}_k^x = \varpi \mathbf{I}_{n_x}$. We used a noise factor of $\varpi=0.3$ throughout the simulations and  also used parameters of $\mathbf{Q}_x=\mathbf{I}_{4}$ and $\mathbf{Q}_u=5\times 10^{-5} \mathbf{I}_{2}$ in \eqref{quad_reward}. The other parameters are $K=30$, $S=10$, $C=2$, and $I=9$.

	We apply the proposed EM-GPS approach and  run simulations from numerous random initial conditions.
	Three {\it baselines} from the existing GPS variants are used in this section, i.e., ${\textbf{GPS:PI}}$ (with PI$^2$ trajectory optimization \cite{chebotar2017combining}), 
	${\textbf{GPS:iLQG}}$ (with an iLQR-based trajectory optimizer to enhance the $\text{PI}^2$  \cite{chebotar2017path}), and ${\textbf{GPS:MD}}$ (with the search as approximate mirror descent \cite{montgomery2016guided}). Using them as the initial parameter estimates $\hat{\theta}^0$, 
	the EM algorithm is applied, which results in three enhanced versions of GPS, i.e., 
	${\textbf{EM-GPS:PI}}$,  ${\textbf{EM-GPS:iLQG}}$, and ${\textbf{EM-GPS:MD}}$.


	%
	
	For each baseline, we first run the GPS algorithm until convergence is achieved and store the optimal parameters of $\hat{\theta}^0$ and simultaneously learn the neural network model parameter ${\zeta}^{L,0}$. The neural network policy is implemented on a deep learning Caffe based framework \cite{jia2014caffe}. Then, we carry out an offline computation of the EM optimization routine to generate the optimal policy parameters $\hat{\theta}^{i}$ and hence 
	train the neural network model parameter ${\zeta}^{L,i}$, for $i=1,\cdots, I$. 
	Finally, the neural network is tested for 10 new initial state distributions and
	10 samples from each distribution (100 experiments)  to evaluate the performance. 
	
	
	
	\subsection{Neural Network and Computation}
	
	The neural network employed for the policies consists of two fully connected hidden layers, each with $42$ dimension rectified linear (RELU) units \cite{nair2010rectified}. The neural network training is carried out for each iteration of the EM-GPS approach, with ADAM \cite{kingma2014adam} as the optimizer with the learning parameters as mentioned in \cite{zhang2016learning}. Inside each iteration of GPS, training is done for $4000$ epochs and inside each epoch with a minibatch size of $25$ and total batches per epoch of $50$.

	The offline EM optimization was conducted
	using multiple 2.6 GHz Intel Xeon Broadwell (E5-2697A v4)
	processors on the high performance computing (HPC) grid located
	at the University of Newcastle. 
	The other steps of the simulation were performed on Dell Alienware 15 R2 using a 64-bit on
	Ubuntu 16.04 OS, 8 Intel Core i7-6700HQ CPU @ 2.60GHz. 
	We leveraged the parallel processing power of the same in order to fasten the optimization.  We used Python 2.7 version for testing our offline optimization and online learning. During the offline maximization Scipy 0.10.0 was utilized. 

	\subsection{Evaluation of EM-GPS}
	{\it Cost-to-go: } The cost-to-go of the real costs $\sum_{k=1}^K Y_k(\mbx_k,\mbu_k)$ of the learned policies 
	for the aforementioned 100 testing experiments is evaluated for the three baselines and the three EM enhanced policies. The results 
	are visualized in terms of a box-plot which clearly depicts lower magnitudes of cost-to-go of the EM-based policies 
	compared to the baselines; see Fig~\ref{fig:cost}.
	
	\begin{figure}[t]
		\centering
		{\includegraphics[scale=0.3]{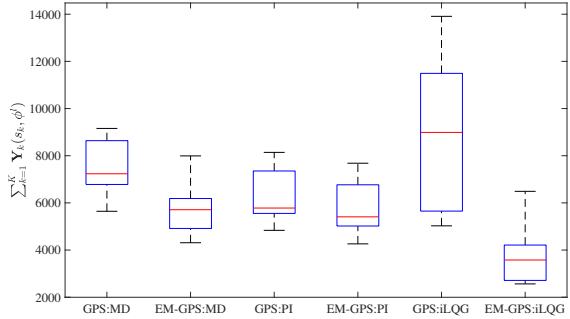}}
		\caption{Profile of cost-to-go of the learned policies for the three baselines and the corresponding EM enhanced policies.}
		\label{fig:cost}
	\end{figure}
	
	\begin{figure}[t]
		\centering
		{\includegraphics[scale=0.35]{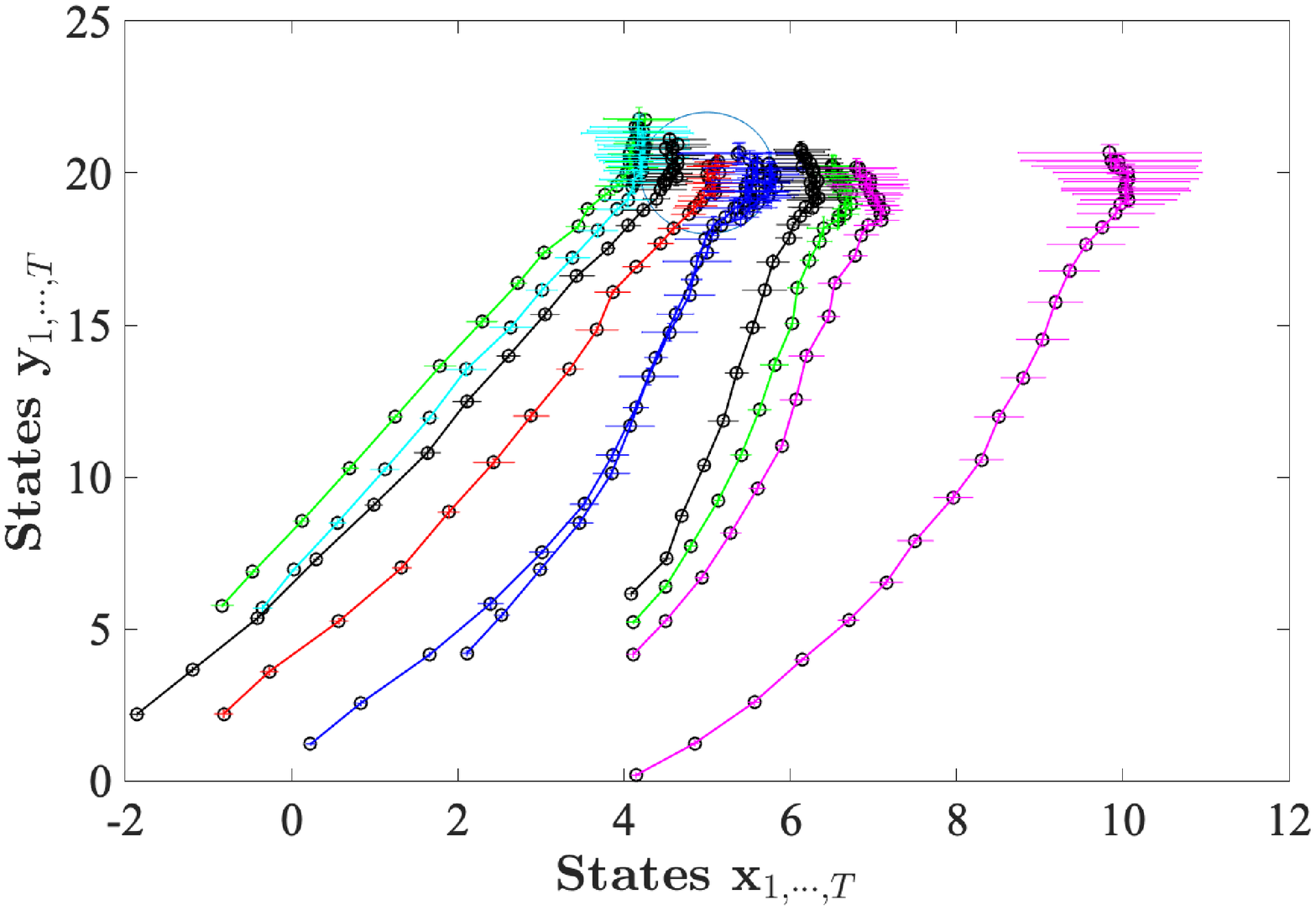}}
		\hfill
		{\includegraphics[scale=0.35]{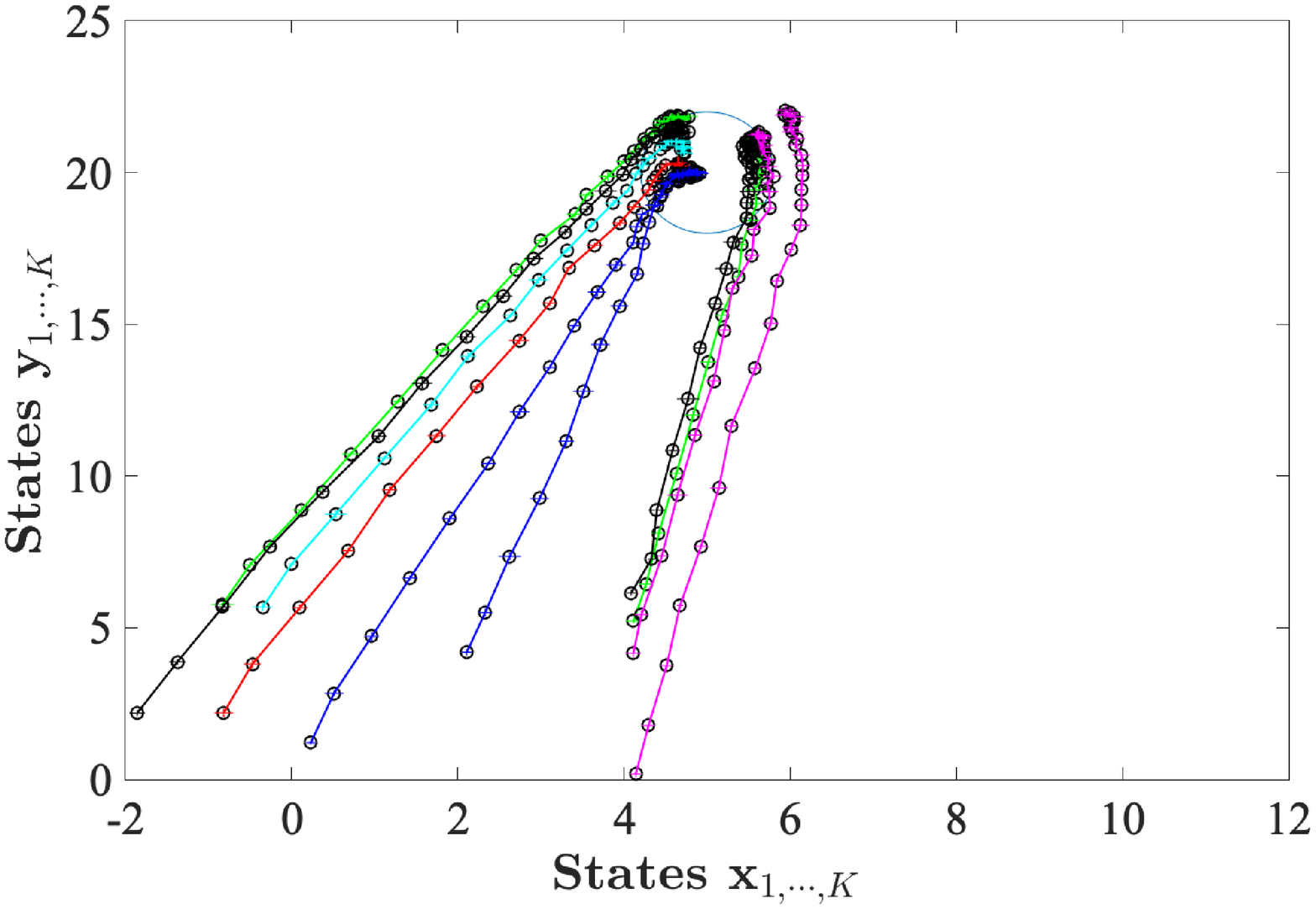}}
		\caption{Trajectories of states ${x}$ and ${{y}}$ for $10$ testing initial state distributions 
			and $10$ rollouts from each distribution     
			with $\textbf{GPS:PI}$ (top) and $\textbf{EM-GPS:PI}$ (bottom).}
		\label{fig:states}
	\end{figure}
	
	\begin{figure}[t]
		\centering
		{\includegraphics[scale=0.35]{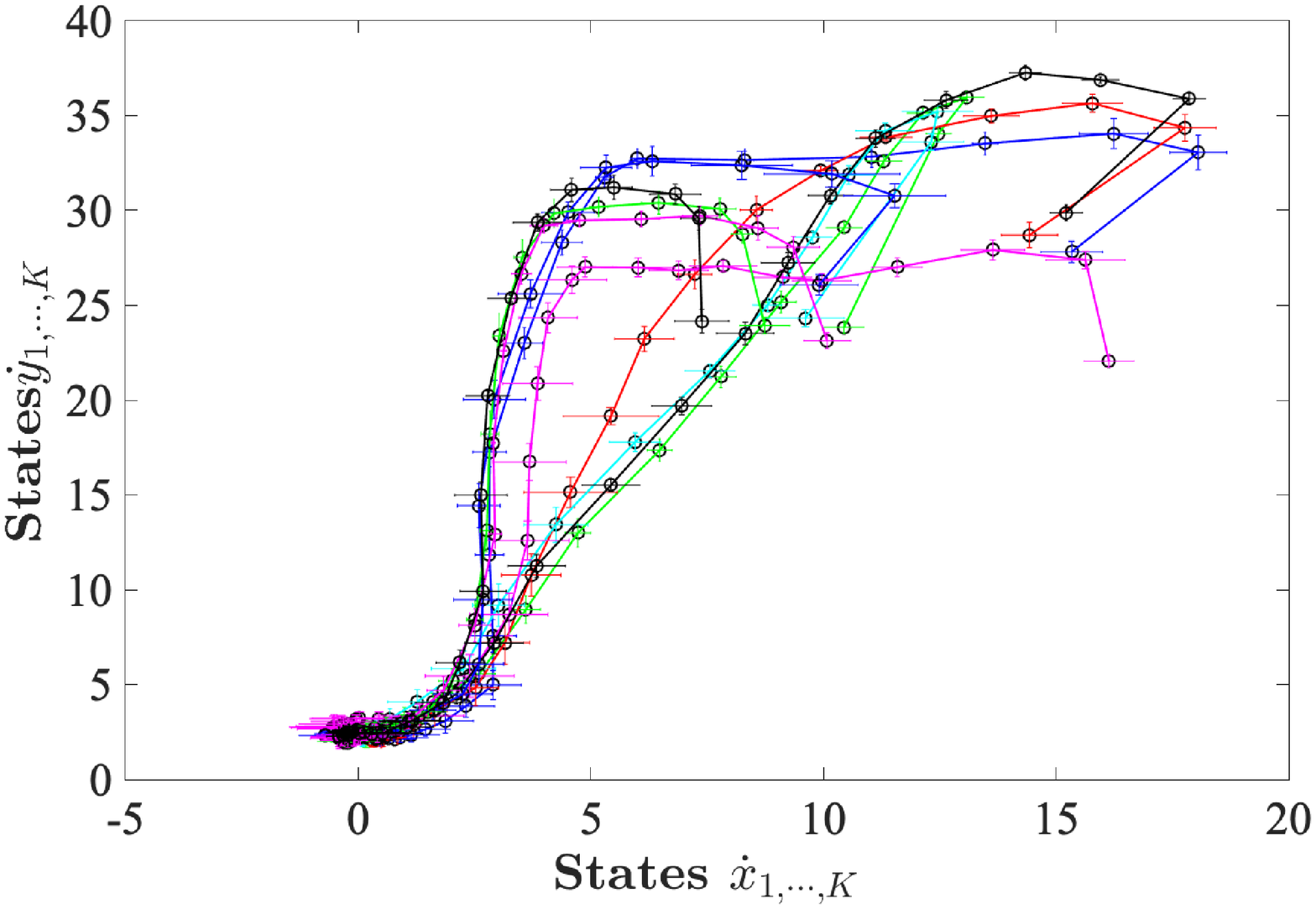}}
		\hfill 
		{\includegraphics[scale=0.35]{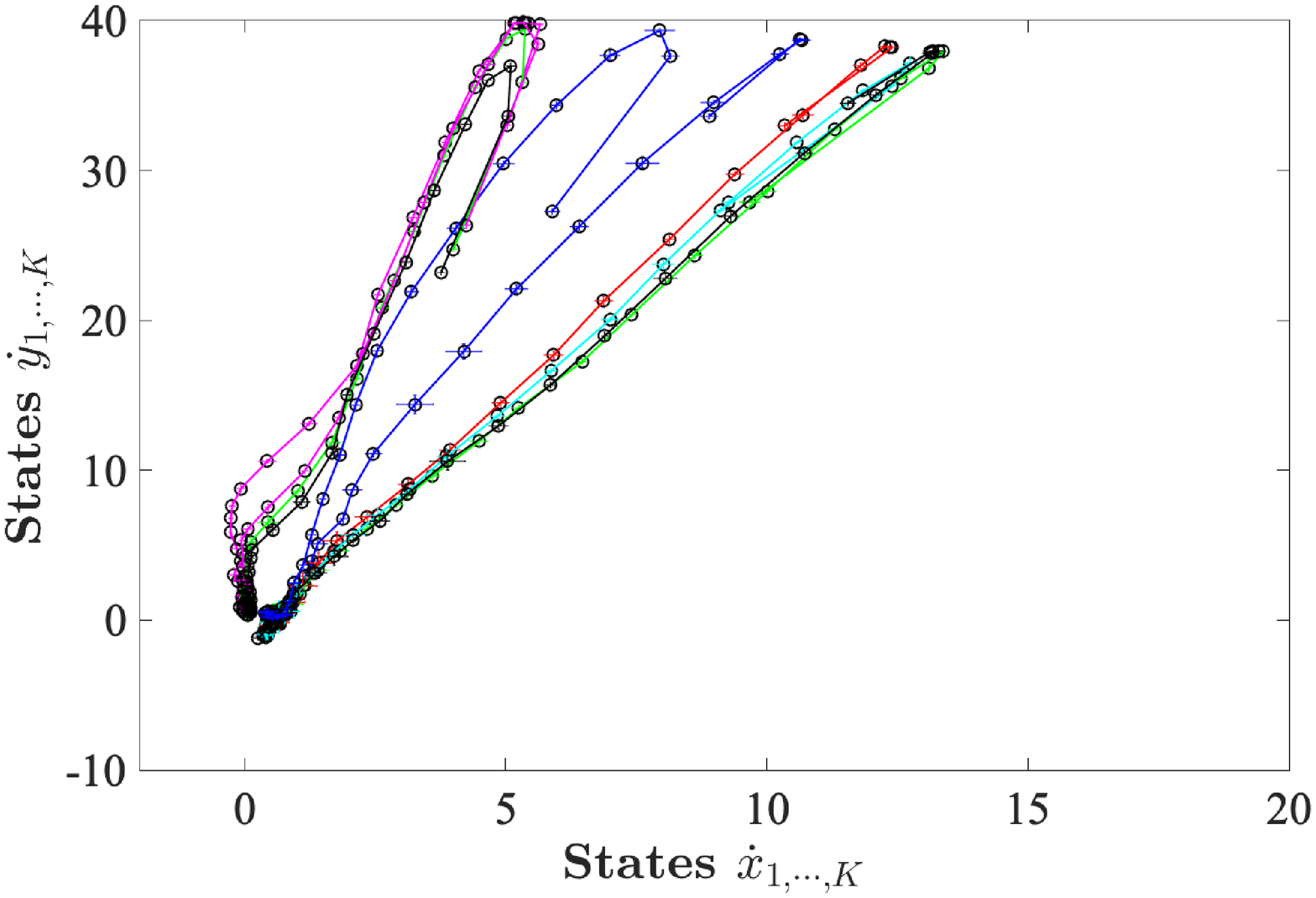}}
		\caption{Trajectories of states $\dot{x}$ and $\dot{{y}}$ for $10$ testing initial state distributions and $10$ rollouts from each distribution   with $\textbf{GPS:PI}$ (top) and $\textbf{EM-GPS:PI}$ (bottom).}
		\label{fig:velocity}
	\end{figure}
	
	{\it State trajectories:} 
	The true position trajectory $[x,y]$ of the \BoxD\ object is plotted  in Fig~\ref{fig:states} for the 100 testing experiments
	with the ${\textbf{GPS:PI}}$ baseline and the ${\textbf{EM-GPS:PI}}$ policy, respectively. In particular,  
	$10$ randomly chosen initial distributions around the neighborhood of $[0,5]$ and $[2,5.5]$ were selected 
	and $10$ rollouts were taken from each distribution. The mean$\pm$std-dev of the position trajectory is depicted in these figures. 
	It is observed that most of the generalization by the path-integral ($\text{PI}^2$) trajectory optimizer was unsuccessful in reaching the final goal target. On the contrary, the mean$\pm$std-dev of the samples obtained from the EM approach exhibited 
	better performance in reaching the target coordinate. 
	Not only the generalization from different task instances of the EM policy was better in terms of the trajectories, but also it is evident that the accumulation of the uncertainties in the baseline trajectories was more noisy as compared to that of EM. 
	It verifies the learning efficiency of the EM-GPS approach in handling noise.
	Additionally, the velocity trajectories $[\dot{x},\dot{y}]$ are plotted in Fig~\ref{fig:velocity} and they are expected to approach $0$ corresponding to staying still when the object reaches the target. Similar observation supports the success of the EM-GPS approach.

	{\it Success of generalization: } We define an ellipse centered the final target position $[x^*,y^*]= [5,20]$ for state trajectories with the elliptical x-radius of $0.8$ and y-radius of $2.0$.
	Also we define another ellipse for the acceleration profile, i.e., $[u_1,u_2]$,  centered $[0,0]$ with x-radius of 0.4 and y-radius of 1.5.
	We call an experiment is successful if the final position $[{x}^{tr}_K,y^{tr}_K]$ is within the first ellipse
	and the control actions ${\mbu}_{K-1}$ within the second one.
	Then, one can evaluate the number of successful experiments to verify the effectiveness of the generalization by GPS learning. Fig~\ref{fig:generalization} shows the success statistics for all the three baselines and the corresponding EM-GPS approaches
	using different learning samples. It can be clearly concluded that the quality of learning is comparatively better for EM-GPS as compared to most (almost all) of the trajectories from learning using the baseline policies. Also, it is clearly evident that the EM-based learning approaches are more sample efficient as compared to the baselines.
	
	\begin{figure} [t]
		\centering
		{\includegraphics[scale=0.4]{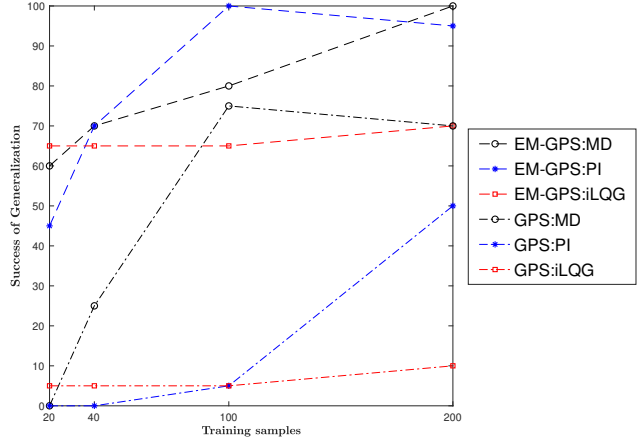}}
		\caption{Statistics of successful cases of the three baselines and their corresponding EM enhanced policies
			on 10 testing initial state distributions and 10 samples from each distribution (100 experiments).}
		\label{fig:generalization}
	\end{figure}

	{\it Control actions: }The control actions of the \BoxD\ object represents a proportionality to the force (i.e., acceleration profile) applied on the mass in both $x$ and $y$ direction. Fig~\ref{fig:KDE_plots} shows the kernel density plots of control actions on $5$ different initial state distributions with  5 rollouts for each distribution for each specific time step (we use less experiments here to keep the plot neat). Each cluster represents each time step and there are a total of $30$ clusters for each initial condition.  It clearly indicates that the optimal control actions as a result of the learning from ${\textbf{EM-GPS:PI}}$ contain substantially lesser noise as compared to the learned policies of ${\textbf{GPS:PI}}$.

	\begin{figure}[t]
		\centering
		{\includegraphics[scale=0.207]{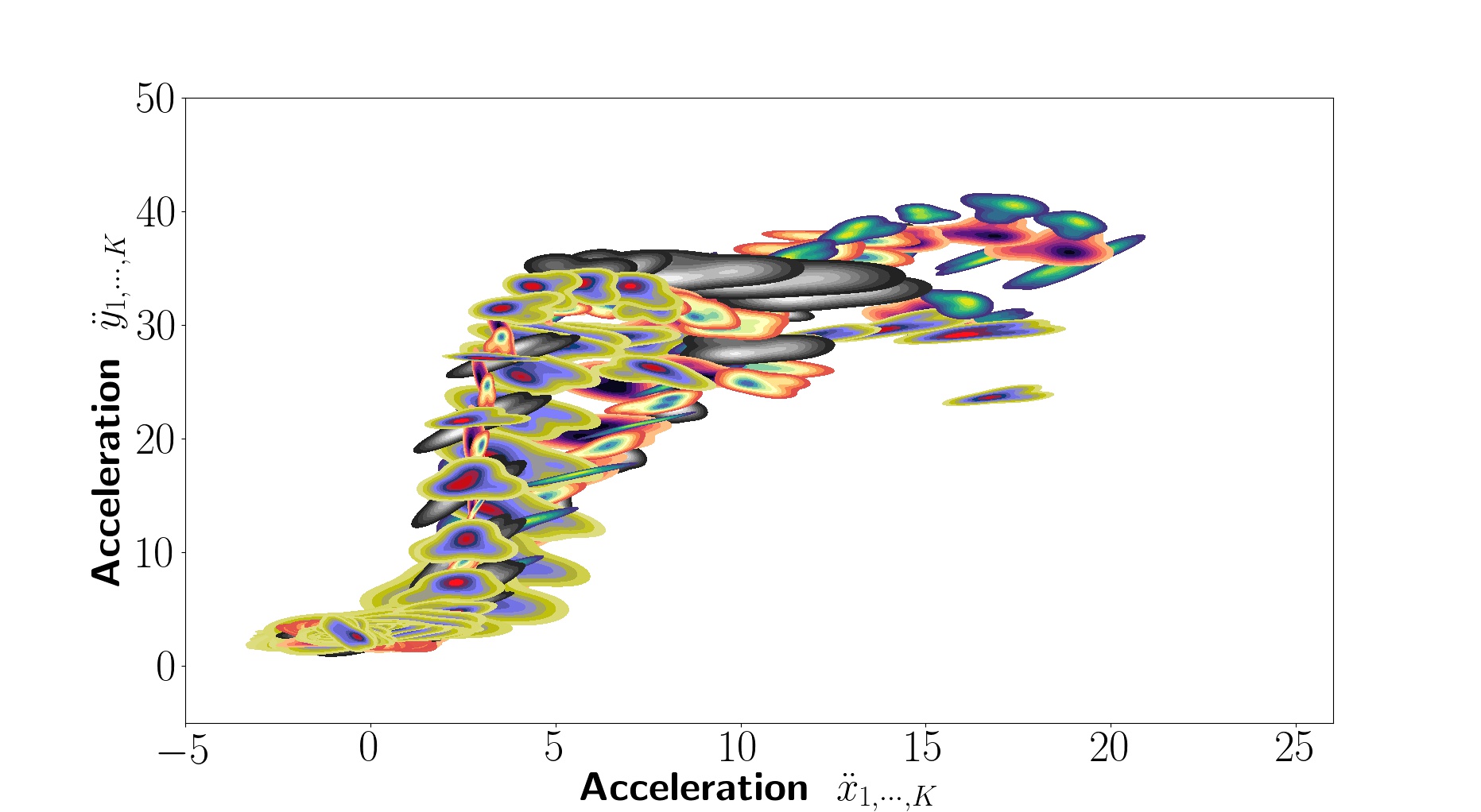}}
		\hfill
		{\includegraphics[scale=0.207]{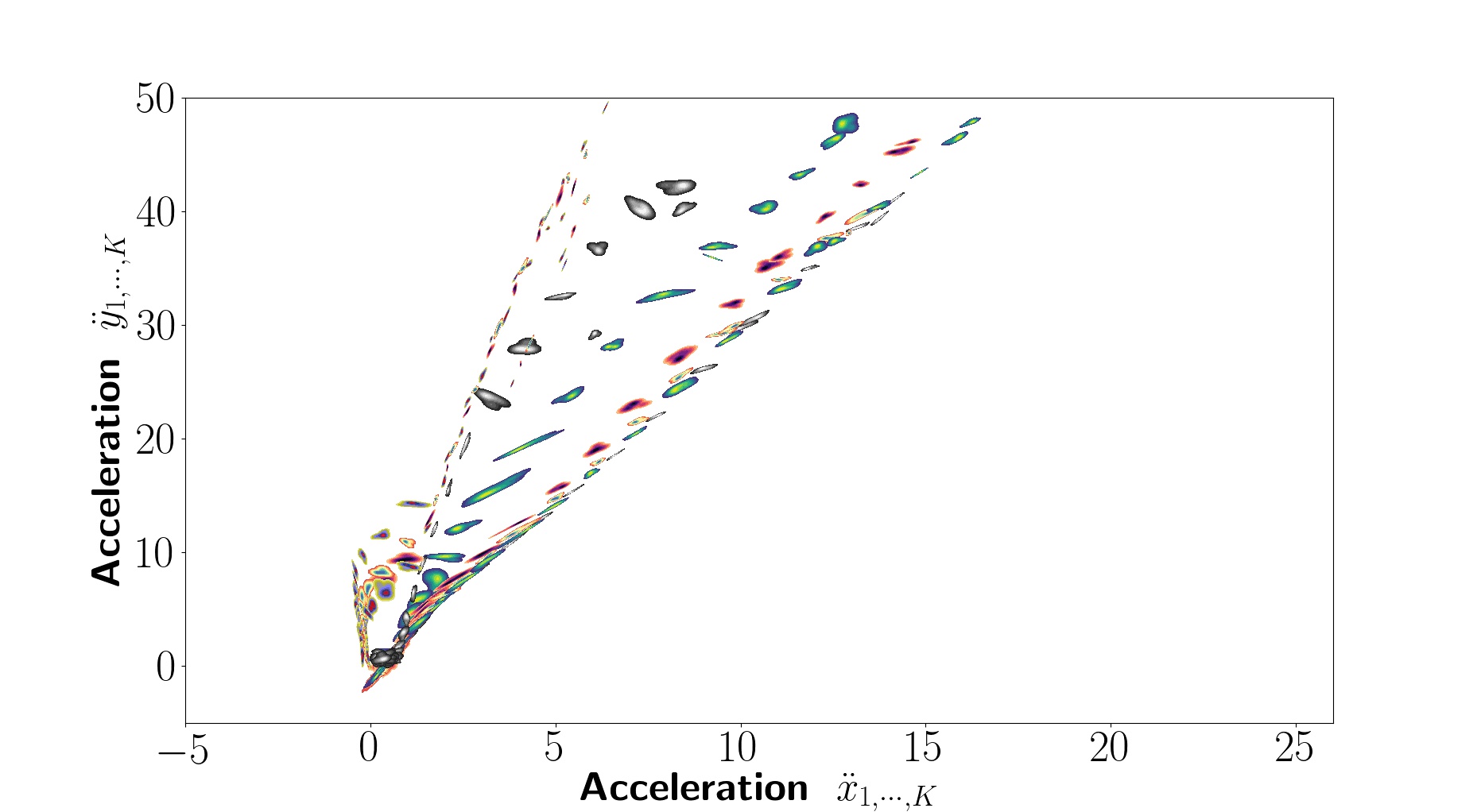}}
		\caption{Kernel density of the control actions for  $5$ testing initial state distributions 
			and $5$ rollouts  from each distribution represented by $5$ different colormaps, with $\textbf{GPS:PI}$ (top) and $\textbf{EM-GPS:PI}$ (bottom).}
		\label{fig:KDE_plots}
	\end{figure}


	\section{Conclusion} \label{conclusion}

	In this paper, we have proposed an EM-GPS approach which relies on considerable local policy parameter estimates of existing state-of-art trajectory optimization algorithms.
	It is a new variant of global policy learning approach which works effectively in a POMDP setting. We have provided theoretical analysis which states that optimal samples generated from neural network contain less noise as compared to the baselines. Also, we have obtained extensive empirical results on numerous performance metrics such as 1) approximated cost-to-go as the objective function; 2) respective error bars of state and action trajectories of learnt policies; 3) success of generalization from different random initial conditions. We have also shown that a trajectory produced as a result of those parameters 
	exhibits an efficient behavior in terms of reachability near the target.  
	The future work would specifically investigate two aspects of the proposed
	EM-GPS approach. On one hand, it is interesting to extend the approach of global policy learning to 
	nonlinear dynamical models where we may use a particle filter for a new  
	GPS variant. On other hand, we will further study stable dynamical models 
	by considering state and input constraints and seek approximation of the likelihood function that 
	may act as a cost function.

\bibliography{mybibfile}

\end{document}